\documentclass[12pt]{article}
\usepackage{amsgen, amsmath, amsfonts, amsthm, 
amssymb, enumerate,amscd,color, multirow}
\usepackage[all]{xy}

\oddsidemargin -.015in \evensidemargin -.015in \textwidth 6.6in
\topmargin -.15in \textheight 9in

\def\s#1#2{\langle \,#1 , #2 \,\rangle}

\definecolor{blue}{rgb}{0,0,1}
\definecolor{red}{rgb}{1,0,0}
\definecolor{green}{rgb}{0,1,0}

\newtheorem{defn}{Definition}[section]
\newtheorem{prop}[defn]{Proposition}
\newtheorem{lem}[defn]{Lemma}
\newtheorem{thm}[defn]{Theorem}

\newtheorem{rem}[defn]{Remark}

\newcommand {\C}{{\mathbb C}}

\newcommand {\G}{{\Gamma}}

\newcommand {\g}{{\gamma}}

\newcommand {\HH}{{\mathfrak  H}}

\newcommand {\ba}{{\mathfrak a}}
\newcommand {\bb}{{\mathfrak b}}

\newcommand {\ca}{{\mathbf a}}

\def\ca{{\mathfrak a}}

\def\c1{{\mathfrak a_1}}

\def\cb{{\mathfrak b}}
\def\cc{{\mathfrak c}}

%scaling matrices
\def\sa{{\sigma_\mathfrak a}}

\def\s1{{{\sigma_{\mathfrak a_1}}}}

\def\sb{{\sigma_\mathfrak b}}
\def\sc{{\sigma_\mathfrak c}}

%\topmatter
\title{\bf New percolation crossing formulas and second-order modular forms}
\author {Nikolaos Diamantis  \\ \it
\it \small School of Mathematical Sciences\\ \small \it University of Nottingham\\
\it  \small Nottingham, UK 
 \\ and  \\ Peter Kleban \\ 
\it \small LASST and Department of Physics \& Astronomy\\ \it  \small University of Maine, Orono, Maine, USA}
%\date{\today}
\begin{document}

\maketitle
\abstract   
We consider the three new crossing probabilities for percolation recently found via conformal field theory \cite{SKZ}. We prove that all three of them (i) may be simply expressed in terms of Cardy's \cite{C} and Watts' \cite{W} crossing probabilities, (ii) are (weakly holomorphic) second-order modular forms of weight 0 (and a single particular type) on the congruence group $\Gamma(2)$, and (iii) under some technical assumptions (similar to those used in \cite{KZ}), are completely determined by their transformation laws. 

The only physical input in (iii) is Cardy's crossing formula, which suggests an unknown connection between all crossing-type formulas.
 \endabstract 
 
%SECTION
\section{Introduction}

Modular behavior appears in percolation theory on a rectangle, as is 
demonstrated in \cite{KZ}.  There, formulas for the crossing probabilities at the percolation point obtained via conformal field theory (or more exactly,  their derivatives 
with respect to the aspect ratio $r$) are shown to have interesting 
modular properties.  Further, a modular characterization of these 
probabilities is given via several theorems.

In that work, there are two main quantities of interest.  The 
``horizontal" crossing probability $\Pi_h(r)$ is the probability of 
finding at least one cluster that connects the left and right (vertical) edges of the 
rectangle, with $r$ being the ratio of horizontal to vertical edge lengths.  An 
explicit formula for this quantity was originally 
calculated by Cardy \cite{C} via conformal field theory, and more recently 
proven rigorously, for the triangular lattice, by Smirnov \cite{S}. The 
``horizontal but not vertical" crossing probability $\Pi_{h \bar v}(r)$ is 
the probability of finding at least one cluster that connects the left and right 
edges of the rectangle while the top and bottom 
(horizontal) edges are not connected.  A formula for this was first obtained with conformal 
field theory by Watts \cite{W} and recently proven rigorously by 
Dub{\'e}dat \cite{D} using Schramm-Loewner Evolution methods.

The behavior of the crossing probabilities under $S: z \to -1/z$ (with $z 
= i \, r$) follows from physical symmetries of the problem, while the 
behavior under $T: z \to z +1$ arises from the structure of the formulas 
themselves (explicitly, the appearance of a single ``conformal block", see 
\cite{KZ} for details), but has no obvious physical origin.  

Two surprises arise in \cite{KZ}.  First, there is no reason to expect 
modular behavior  on a rectangle at all  since  it lacks the appropriate symmetry.  (By contrast, on a torus 
modular behavior of crossing probabilities is both expected and 
observed (\cite{P})). Despite this,  
$\Pi'_{h}(r)$ is a modular form (the prime denotes differentiation), and is completely determined by a 
simple modular argument that assumes its physical symmetry.  
Secondly,  $\Pi'_{h \bar v}(r)$  is observed to possess unusual 
modular behavior, which leads to the definition of a new kind of modular 
object, the $n^{th}-order\, modular \, form$.  Such objects were also 
defined independently, in a different context (\cite{CDO}) 
and have since been explored systematically (e.g. \cite{IM}, 
\cite{CD}, \cite{DSr}, \cite{DSi}, \cite{De}).

For percolation, the appearance of a second-order modular form may be 
traced to the difference in sign of the behavior under $S$ of $\Pi'_h$ and 
$\Pi'_{hv}$, where $\Pi_{hv}(r) = \Pi_{h}(r) - \Pi_{h \bar v}(r)$ is the 
probability of finding a cluster that crosses both horizontally and 
vertically.

Recently, three new percolation crossing probabilities have been 
calculated from conformal field theory, $\pi_h^b(\alpha, \beta)$, 
$\pi_h^{\bar b}(\alpha, \beta)$, and $\nu_h(\alpha, \beta)$  (\cite{SKZ}).  
These quantities (see below for the explicit formulas) are more precisely 
probability densities for clusters that connect to two specified points 
 on opposite vertical edges of a 
rectangle, under certain specified conditions (see \cite{SKZ} for 
details, but note that the points on the rectangle are the images of  $\alpha$ and $\beta$, which lie on the real axis).  Their integrals give $\Pi_{h}$, $\Pi_{h \bar v}$, and ${\cal 
N}_h$, where the latter is the expected number of horizontally crossing 
clusters. (This was calculated first via conformal field theory \cite{C2,C3} and later rigorously \cite{HS}.)

In this work we examine the modular properties of $\pi_h^b$, $\pi_h^{\bar 
b}$, and $\nu_h$ under the congruence group $\Gamma(2)$.  Unexpectedly, we 
find that all three of them are second-order modular forms.  This occurs 
even though they again are on a rectangle, which does 
not possess the requisite symmetry under $S$, and they specify crossings 
between points, rather than intervals. These functions are also very 
interesting from a purely number-theoretic viewpoint, as they illustrate 
the occurence of an extension of higher order forms which involves  
\it two \rm group actions and whose study, though formally natural, would 
seem otherwise unmotivated. 

Proposition \ref{PropDZ} shows that all three crossing probabilities may 
be written as expressions linear in the ratio $\phi$, which is 
proportional to $\Pi'_{h \bar v} / \Pi'_h $ (see (\ref{phidef}), the prime 
denotes a derivative), with coefficients that are algebraic functions of 
the cross-ratio $\lambda$ of the four points defining the rectangle.

Theorem \ref{weakhol}  proves that all three crossing probabilities are 
weakly holomorphic second-order modular forms (definitions are given in 
Section \ref{hef}), and determines their 
leading Fourier terms at each cusp. Weak holomorphicity has been studied 
and led to striking results 
including the resolution of classical number-theoretical problems on 
partitions of integers and other results by
D. Zagier, R. Borcherds, K. Ono, K. Bringmann (\cite{B, Z, BO1, BO}, see 
also \cite{DJ} for further references). However we did not 
expect it in this context. 

Theorem  \ref{ThmPi} is a Hamburger-type 
result for the weakly holomorphic higher-order forms studied in this 
work. It proves that, under some technical assumptions analogous to those 
used in \cite{KZ}, all three crossing probabilities are determined by the 
transformation laws they satisfy.  Since the only physical input to the theorem
is  $\Pi'_h$, one finds that a single basic framework 
of assumptions leads to all three probability functions.  This points to an 
unknown connection between the functions $\Pi_{h}$, $\Pi_{h \bar v}$, and 
${\cal N}_h$.

This work is focused on the mathematical aspects of the new crossing 
probabilities.  We plan to explore its physical implications more fully 
elsewhere.  However, since the treatment here may also be of interest to 
physicists, we have included some comments that are intended to make the 
presentation more accessible to that audience.

%SECTION
\section{Crossing probabilities}
Set $T:=\left ( \smallmatrix 1 & 1 \\ 0 & 1 
\endsmallmatrix \right )$ and $S:=\left ( \smallmatrix 0 & -1 \\ 1 & 0 
\endsmallmatrix \right )$. We let $\G(2)$ denote the group of matrices of 
SL$_2(\mathbb Z)$ congruent to the identity mod$(2).$ We use the set of 
generators of $\G(2)$ consisting of
$g_1 := T^2$ and $g_2 := ST^{-2}S^{-1}=\left ( \smallmatrix 1 & 0 \\ 2 & 1 
\endsmallmatrix \right )$ (\cite{R}, p. 63).

Now set $q:=e^{2 \pi i z}$. Let $\eta$ be the Dedekind 
eta function $$\eta(z)=q^{1/24} \prod_{n=1}^{\infty}(1-q^{n}).$$
Also, let $\lambda$ denote the classical modular function for $\G(2)$.
It is given by 
\begin{equation}
\lambda(z):=16\frac{\eta(z/2)^8 \eta(2z)^{16}}{\eta(z)^{24}}=
1-\frac{\eta(z/2)^{16} \eta(2z)^8}{\eta(z)^{24}}.
\label{lambda}
\end{equation}
Neither the function $\lambda$ nor its derivative with respect to $z$
\begin{equation}
\lambda'(z)=16\pi i\frac{\eta(z/2)^{16} \eta(2z)^{16}}{\eta(z)^{28}}
\label{lambdapr}
\end{equation}
have any poles 
or zeros in $\HH$ (cf. \cite{R}, Section 7.2, (7.2.17)).

$\lambda(z)$ appears here because for $z=ir$ it is the cross-ratio of the four points 
to which the corners of a rectangle of aspect ratio $r$ are mapped.

Now let
\begin{equation} 
C := \frac{2^{1/3} \pi^2} {3 \, \Gamma(1/3)^3},
\end{equation} \label{Cdef}
following the notation in  \cite{KZ}.  Then
the horizontal crossing probability $\Pi_h(\lambda(ir))$  satisfies
\begin{equation} 
\frac{d}{dr} \left ( \Pi_h(\lambda(ir))\right ) = 
-4 \sqrt{3} \,C\, \eta(i r)^4,
\label{funct1}  
\end{equation}
as proven in \cite{Zi}. 
 
 The horizontal but not vertical crossing probability
  $\Pi_{h \bar v}(\lambda(ir))$ is proven in \cite{KZ} to satisfy
\begin{equation} 
\frac{d}{dr} \left ( \Pi_{h \bar 
v}(\lambda(ir))\right ) = 
-8\sqrt{3} f_2(ir),
\label{funct2} 
\end{equation} 
where 
\begin{eqnarray} \label{f2int}
f_2(z) &=& \frac{2 \pi i}{3} \eta(z)^4 \int_{\infty}^z 
f_3(w) dw,
\; \text{with} \\ \nonumber
f_3(w)&:=&\frac{\eta(w/2)^8 \eta(2w)^8}{\eta(w)^{12}}. 
\end{eqnarray}
The Dedekind  eta function is a weight $1/2$ 
cusp form on SL$_2(\mathbb Z)$ with a character. We denote the 
character of $\eta(z)^4$  by 
$\chi$. As shown in \cite{KZ}, $f_2(z)$ is a second-order modular form, of a kind which we
 specify  in the next section. 

Now let 
\begin{equation} \label{gdef}
g(z):=\eta(z/2)^{8} \eta(2z)^{8}/\eta(z)^{16}.
\end{equation}
This definition is useful for avoiding ambiguities from fractional
 powers of $\lambda(1-\lambda)$.
 Because of  
(\ref{lambda}) $g$ satisfies 
\begin{equation}
g^3=\frac{1}{16}\lambda(1-\lambda).
\label{g(z)}
\end{equation}

  Then, combining (\ref{lambda}) and (\ref{lambdapr}) gives, for $z\in 
\HH$, 
\begin{equation}
\eta(z)^4 = \frac{1}{16 \pi i}\frac{\lambda'(z)}{g(z)^2},
\label{eta4lambda}
\end{equation}
as well as
\begin{equation}
f_3(z) = \frac{1}{16 \pi i} \frac{\lambda'(z)}{g(z)}.
\label{f3lambda}
\end{equation}
Now let
\begin{equation}
\phi(z) :=\frac{f_2(z)}{\eta(z)^4}=\frac{1}{2}C\, 
\frac{\Pi'_{h \bar v}(z)}{\Pi'_h(z)}.
\label{phidef}
\end{equation}

Since $\Pi_{hv}(i/r) = \Pi_{hv}(ir)$ by symmetry, $\Pi'_{hv}(i) = 0$ (note that $r = 1$ is a square). Further, $\Pi_{hv} = \Pi_{h} - \Pi_{h \bar v}$, so that $\Pi'_{h}(i) = \Pi'_{h \bar v}(i)$.
Thus
 \begin{equation}
\phi(i) = \frac C{2}.
\label{phi(i)}
\end{equation}

Using (\ref{f2int}) to calculate $\phi'$ and (\ref{f3lambda}), one sees 
that $\phi$ is a function of  $\lambda$ only. Integrating gives 
\begin{equation}
\phi(z) = \frac1{2^{8/3}}  \, 
\lambda(z)^{2/3} \,{}_2F_1(1/3,2/3;5/3;\lambda(z)).
\label{phi}
\end{equation}

In a more recent work (\cite{SKZ}), the three new crossing probabilities 
mentioned in the Introduction are introduced and computed. For 
$0<\alpha<1$, and $1<\beta$, they are given by 
\begin{eqnarray}
\pi_h^b(\alpha, \beta)&=&\frac{(\beta+\alpha) \, _2F_1(1, 
4/3, 5/3, 1-\alpha/\beta)-2 \beta}{4 \sqrt{3} \, \pi  \, \beta^2(\beta-\alpha)},
\label{form1} \\
\pi_h^{\bar b}(\alpha, \beta)&=&\frac{(\beta+\alpha) \, 
_2F_1(1, 
4/3, 5/3, \alpha/\beta)+2\beta}{4 \sqrt{3} \, \pi \, \beta^2(\beta-\alpha)}
\label{form2}, {\rm and} \\ 
\nu_h(\alpha, \beta)&=&\frac{\beta^2+2 \alpha \beta-
(\beta^2-\alpha^2) \, _2F_1(1, 
4/3, 5/3, \alpha/\beta)}{4 \sqrt{3} \, \pi \, \beta^2(\beta-\alpha)^2}
\label{form3}. 
\end{eqnarray} 

In \cite{SKZ}  it is shown that a double integration of each of these 
quantities gives the crossing probabilities studied in \cite{KZ} as well as ${\cal N}_h$. 
Specifically 
\begin{eqnarray} 
\frac{1}{2}\Pi_{h\bar v}(\lambda)
&=&\int_0^{\lambda} \int_1^{\infty}\pi_h^{\bar b}(\alpha, \beta) \,d\beta \,
d\alpha 
\label{form11}, \\ 
\Pi_h(\lambda)-\frac{1}{2}\Pi_{h \bar v}(\lambda),
&=&\int_0^{\lambda} \int_1^{\infty}\pi_h^b(\alpha, \beta) \,d\beta \,d\alpha 
\label{form21}, \; {\rm and} \\ 
-\frac{1}{2}\Pi_{h\bar v}(\lambda)
+\frac{\sqrt{3}}{4 \pi} \log \left ( \frac{1}{1-\lambda}\right )
&=&\int_0^{\lambda} \int_1^{\infty}\nu_h(\alpha, \beta) 
\,d\beta \,d\alpha .
\label{form31} 
\end{eqnarray}
(The lhs of (\ref{form31}) is equal to ${\cal 
N}_h(\lambda)-\Pi_h(\lambda)$).
 
For clarity, we now introduce  the notation 
\begin{eqnarray} \label{newdef}
p_{\bar b}(z) &:=& \pi_h^{\bar b} (\lambda(z), 1) \; , \nonumber \\
 p_{b}(z) &:=&  \pi_h^{b} (\lambda(z), 1) \; ,\; {\rm and}  \nonumber  \\
n(z)   &:=& \nu_h(\lambda(z), 1) \; ,
\end{eqnarray}
 for the holomorphic functions of  $z \in \HH$ obtained
from (\ref{form1}), (\ref{form2}), and (\ref{form3}) after we replace $\alpha$ and $\beta$ with $\lambda(z)$  and
$1$ respectively. 

We next prove (with prime standing for differentiation in $z$)
%THEOREM
\begin{thm} \label{1stthm} For all  $z \in \HH$,
\begin{eqnarray}
\lambda'(z) \, p_{\bar b}(z)&=&
4 \sqrt {3} \, i \left (
\frac{\lambda(z) f_2(z)}{\lambda'(z)} \right )'
\label{eq1} \; , \\
\lambda'(z) \, p_{b}(z)&=&
4 \sqrt {3}\,C \, i \left ( 
\frac{\lambda(z)}{\lambda'(z)} \eta^4 \right )'-
4 \sqrt {3} \, i \left ( \frac{\lambda(z) f_2(z)}{\lambda'(z)} \right )'
\label{eq2}\; , \; {\rm and} \nonumber \\ \\
\lambda'(z) \, n(z) &=&
\frac{\sqrt{3} }{4 \pi} 
\left ( \frac{\lambda(z)}{1-\lambda(z)} \right )'
-4 \sqrt {3} \, i \left ( \frac{\lambda(z) f_2(z)}{\lambda'(z)} \right )' \; .
\label{eq3}
\end{eqnarray}
\end{thm}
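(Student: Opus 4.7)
The plan is to exploit a simple homogeneity property of the three density functions together with the double-integral identities (\ref{form11})--(\ref{form31}) to reduce each $p_{\bar b}(z)$, $p_b(z)$, $n(z)$ to a second derivative with respect to $\lambda$, and then to translate these $\lambda$-derivatives into $z$-derivatives via the chain rule and the known identifications of $\Pi'_h$ and $\Pi'_{h\bar v}$ with $\eta^4$ and $f_2$ from (\ref{funct1})--(\ref{funct2}).

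The first step is a direct inspection of (\ref{form1})--(\ref{form3}): by factoring $\beta$ out of $(\beta+\alpha)$, $(\beta-\alpha)$, and $(\beta^2-\alpha^2)$, and noting that the hypergeometric argument is already $\alpha/\beta$ (or $1-\alpha/\beta$) in each formula, one verifies the homogeneity
\[
F(\alpha,\beta) = \beta^{-2} F(\alpha/\beta,\,1) \qquad \text{for } F \in \{\pi_h^b,\;\pi_h^{\bar b},\;\nu_h\}.
\]
Substituting this into the inner integral of the integral identities and changing variables $u=\alpha/\beta$ collapses each double integral: e.g.
\[
\int_1^\infty \pi_h^{\bar b}(\alpha,\beta)\,d\beta \;=\; \frac{1}{\alpha}\int_0^\alpha \pi_h^{\bar b}(u,1)\,du,
\]
so that (\ref{form11}) becomes $\tfrac{1}{2}\Pi_{h\bar v}(\lambda) = \int_0^\lambda \frac{1}{\alpha}\int_0^\alpha \pi_h^{\bar b}(u,1)\,du\,d\alpha$, and analogously for (\ref{form21}) and (\ref{form31}). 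One $\lambda$-differentiation yields $\tfrac12 (\Pi_{h\bar v})'_\lambda (\lambda) = \tfrac{1}{\lambda}\int_0^\lambda \pi_h^{\bar b}(u,1)\,du$; multiplying through by $\lambda$ and differentiating once more extracts
\[
\pi_h^{\bar b}(\lambda,1) = \tfrac{1}{2}\,\tfrac{d}{d\lambda}\bigl[\lambda\,(\Pi_{h\bar v})'_\lambda(\lambda)\bigr],
\]
with the corresponding identities for $\pi_h^b(\lambda,1)$ and $\nu_h(\lambda,1)$; for the latter the extra $\tfrac{\sqrt 3}{4\pi}\log\!\bigl(1/(1-\lambda)\bigr)$ term in (\ref{form31}) contributes $\tfrac{\sqrt 3}{4\pi}\tfrac{d}{d\lambda}\bigl[\lambda/(1-\lambda)\bigr]$.

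To finish, one converts these $\lambda$-derivatives to $z$-derivatives via the chain rule $\lambda'(z)\,G'_\lambda(\lambda(z)) = \tfrac{d}{dz}G(\lambda(z))$, which is exactly how the prefactor $\lambda'(z)$ on the left-hand sides of (\ref{eq1})--(\ref{eq3}) arises. The identities (\ref{funct1}) and (\ref{funct2}), after a chain-rule conversion from the real parameter $r$ to $z$, give $(\Pi_h)'_\lambda(\lambda(z)) = 4\sqrt 3\,C i\,\eta(z)^4 / \lambda'(z)$ and $(\Pi_{h\bar v})'_\lambda(\lambda(z)) = 8\sqrt 3\,i\, f_2(z)/\lambda'(z)$; inserting these into the formulas from Step~2 and collecting terms produces precisely the right-hand sides of (\ref{eq1}), (\ref{eq2}), (\ref{eq3}).

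The main obstacle is purely bookkeeping: the primes in the statement refer to $z$, whereas (\ref{funct1})--(\ref{funct2}) are written in the real variable $r=z/i$, so the factors of $i$ and the normalization constant $C$ must be tracked carefully when the real-axis identities are extended by analytic continuation. Once this conversion is made, the assembly of (\ref{eq1})--(\ref{eq3}) is mechanical.
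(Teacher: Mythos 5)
Your proposal is correct and follows essentially the same route as the paper's proof: the homogeneity $F(\alpha,\beta)=\beta^{-2}F(\alpha/\beta,1)$, the change of variables collapsing the inner integral, two differentiations, and the chain-rule conversion via (\ref{funct1})--(\ref{funct2}), with analytic continuation off the imaginary axis. The only (immaterial) difference is that you organize the computation around $\lambda$-derivatives and convert to $z$ at the end, whereas the paper differentiates directly in $ir$.
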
 
\begin{proof}
We first deduce from (\ref{form2}) that 
$\pi_h^{\bar b}(\alpha, \beta)=\beta^{-2} \, \pi_h^{\bar b}(\alpha/\beta, 1).$
This and the change of variables $ \alpha/\beta \to \beta$ 
imply that the rhs of (\ref{form11}) equals
$$\int_0^{\lambda} \int_0^{\alpha} \pi_h^{\bar b}(x, 1)dx \frac{d 
\alpha}{\alpha}$$
and hence a differentiation in terms of $i r$ 
together with (\ref{funct2}) (recalling that 
the differentiation in (\ref{funct2}) is in terms of
$r$ and not $ir$) and re-arranging gives
$$ 
4 i \sqrt{3} \,  \frac{ \lambda(ir) f_2(ir)}{\lambda'(ir)}= 
\int_0^{\lambda(ir)} \pi_h^{\bar b}(x, 1)dx. 
$$
One more differentiation in terms of $ir$ and the definitions (\ref{newdef}) implies the result for $z=ir$.
The analytic continuation of $\pi_h^{\bar b}(\cdot, 1)$ and the 
fact that $\lambda$, $\lambda'$ do not have any poles or zeros in $\HH$ 
imply that it holds for all $z \in \HH$.

The assertions about the other two crossing probabilities are proved in exactly 
the same way using (\ref{form21}), (\ref{form31}).
\end{proof}

We next re-write $p_{\bar b}$, $p_{b}$, and $n$ in a way that avoids derivatives of 
$\lambda$ 
%PROPOSITION
\begin{prop}  \label{PropDZ} For every $z \in \HH$, we have 
\begin{eqnarray*}
p_{\bar b}(z) &=&
\frac{1}{4 \sqrt{3} \, \pi \,g(z)^2} \,  \frac{1+\lambda(z)}{1-\lambda(z)} \,  \phi(z)+
\frac{1}{2 \sqrt{3} \, \pi} \frac{1}{1-\lambda(z)} \\
p_{b}(z) &=&
\frac{1}{4 \sqrt{3}  \,\pi \,g(z)^2} \, \frac{1+\lambda(z)}{1-\lambda(z)} 
\left (2\,\phi(i)-\phi(z) \right )
-\frac{1}{2 \sqrt{3} \, \pi} \frac{1}{1-\lambda(z)} \\
n(z) &=&
-\frac{1}{4 \sqrt{3} \, \pi\, g(z)^2} \,  \frac{1+\lambda(z)}{1-\lambda(z)}  \, 
\phi(z)+
\frac{1}{4 \sqrt{3}  \,\pi} \,  \frac{1+2 \lambda(z)}{(1-\lambda(z))^2}.
\end{eqnarray*}
\end{prop}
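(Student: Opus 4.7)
The plan is to reduce each of the three identities in Theorem~\ref{1stthm} to the claimed closed forms by carrying out the indicated differentiations and expressing everything in terms of $\lambda$, $\phi$ and $g$. The essential reduction comes from (\ref{eta4lambda}) together with the definition (\ref{phidef}): rearranging one gets $\eta^4 = \lambda'/(16\pi i\, g^2)$ and $f_2 = \lambda'\phi/(16\pi i\, g^2)$, whence
\[
\frac{\lambda(z)\,f_2(z)}{\lambda'(z)} = \frac{\lambda(z)\,\phi(z)}{16\pi i\, g(z)^2}, \qquad \frac{\lambda(z)\,\eta(z)^4}{\lambda'(z)} = \frac{\lambda(z)}{16\pi i\, g(z)^2}.
\]
Thus every derivative that appears on the right-hand side of Theorem~\ref{1stthm} is $(16\pi i)^{-1}$ times the derivative of an elementary algebraic expression in $\lambda$, $\phi$ and $g$.

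Two auxiliary derivatives are needed. From $\phi(z) = \frac{2\pi i}{3}\int_{\infty}^{z} f_3(w)\,dw$ and (\ref{f3lambda}),
\[
\phi'(z) = \frac{2\pi i}{3}\,f_3(z) = \frac{\lambda'(z)}{24\, g(z)}.
\]
Differentiating the defining relation $g^3 = \lambda(1-\lambda)/16$ gives $g' = (1-2\lambda)\lambda'/(48\, g^2)$. Combining these and using $g^3 = \lambda(1-\lambda)/16$ once more to clear the higher powers of $g$ that arise, a short computation yields
\[
\left(\frac{\lambda}{g^2}\right)'\! = \frac{\lambda'\,(1+\lambda)}{3\,g^2(1-\lambda)}, \qquad \left(\frac{\lambda\,\phi}{g^2}\right)'\! = \frac{\lambda'\,(1+\lambda)\,\phi}{3\,g^2(1-\lambda)} + \frac{2\,\lambda'}{3(1-\lambda)}.
\]

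With these in hand, substituting into (\ref{eq1}) and dividing by $\lambda'$ (which has no zeros on $\HH$ by (\ref{lambdapr})) produces the first formula after rationalizing $\sqrt{3}/(12\pi) = 1/(4\sqrt{3}\,\pi)$ and $\sqrt{3}/(6\pi) = 1/(2\sqrt{3}\,\pi)$. The second formula follows from (\ref{eq2}) in the same way; the extra term $4\sqrt{3}\,Ci\,(\lambda\eta^4/\lambda')'$ contributes a multiple of $C$ to the coefficient of $(1+\lambda)/(1-\lambda)$, and (\ref{phi(i)}) allows us to replace $C$ by $2\phi(i)$, so that the total coefficient of that fraction combines into $2\phi(i)-\phi(z)$. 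For the third formula we treat the $(\lambda f_2/\lambda')'$-term of (\ref{eq3}) exactly as above and use the direct identity $(\lambda/(1-\lambda))' = \lambda'/(1-\lambda)^2$; the two resulting purely rational contributions then combine via
\[
\frac{3}{(1-\lambda)^2} - \frac{2}{1-\lambda} = \frac{1+2\lambda}{(1-\lambda)^2}.
\]
No step is conceptually subtle, and the only real obstacle is careful bookkeeping of the various $\sqrt{3}$ and $\pi$ factors and of the repeated reductions via $g^3 = \lambda(1-\lambda)/16$.
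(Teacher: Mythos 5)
Your proposal is correct and follows essentially the same route as the paper: both proofs reduce the derivatives in Theorem~\ref{1stthm} to algebraic identities in $\lambda$, $g$, $\phi$ via $\lambda'=16\pi i\,g^2\eta^4$ and $3g^2g'=(1-2\lambda)\lambda'/16$, with your identity $(\lambda/g^2)'=\lambda'(1+\lambda)/(3g^2(1-\lambda))$ being exactly the paper's $(\lambda\eta^4/\lambda')'=\eta^4(1+\lambda)/(3(1-\lambda))$ in disguise. All of your intermediate computations and constant bookkeeping check out.
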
 
\begin{proof} 
By (\ref{f3lambda}),
\begin{equation} \lambda'(z)=16 \pi i  \, g(z)^2\, \eta(z)^4.
\label{lprime}
\end{equation}
Using (\ref{g(z)}) and (\ref{lprime}),
$$3g'g^2=\frac{1}{16}(1-2\lambda)\lambda'=
\pi i (1-2 \lambda ) \, g^2  \, \eta^4.$$
This and (\ref{lprime}) imply that
$$
\left ( \frac{\lambda}{\lambda'} \eta^4\right )'=
\left ( \frac{\lambda}{16 \pi i \,  g^2} \right )'=
\frac{1}{16 \pi i} \frac{\lambda' g - 2g' \lambda}{g^3}=
\frac{\eta^4(1+\lambda)}{3(1-\lambda)}
$$
 Expanding (\ref{eq1}) according to this formula, from (\ref{f3lambda}) 
and the definition of  $f_2$ we deduce 
$$p_{\bar b}(z) =
\frac{1}{4 \sqrt{3}  \, \pi g(z)^2} \, 
\frac{1+\lambda(z)}{1-\lambda(z)} \,  \phi(z)+
\frac{1}{2 \sqrt{3} \pi } \frac{1}{1-\lambda(z)}.
$$

The proof of the  formula for $p_{ b}$ is similar. The formula for $n$ may be obtained by comparing (\ref{form2}) with (\ref{form3}) and making use of the formula for $p_{\bar b}$.
\end{proof}

%REMARK
\begin{rem} \rm  
 Using (\ref{g(z)}) to write $g$ in terms of  $\lambda$,  Proposition \ref{PropDZ} gives each of  $p_{\bar b}$, $p_{b}$, and $n$ as an expression linear in the ratio $\phi$ (see (\ref{phidef})), with the coefficients being  rational functions of $\lambda$.  Thus all three probabilities are determined by the same physical quantity $\phi$, which is itself proportional to $\Pi'_{h \bar v} / \Pi'_h$.  Note also that the constant $C = 2\, \phi(i)$ in $p_{b}$ is exactly the constant $C$ in (39) of \cite{KZ}. It follows that $2\,\phi(i) - \phi(z)$ and
 $\phi(z)$ transform into each other under $S$.
\end{rem}
\begin{rem} \rm  Proposition \ref{PropDZ} illustrates why it is natural 
to work with the group $\G(2)$ rather than the {\it theta group} 
$\G_{\theta}$, as in \cite{KZ}.  The ratio $\phi$ is easily seen--use, 
e.g.\  (39)  of \cite{KZ}--to transform as a 
second-order modular form (for the definition see \cite{KZ} or the next section) under the generators of either $\G(2)$ (e.g.\ $\{g_1, g_2\}$) or $\G_{\theta}$ (e.g.\ $\{g_1, S\}$).  However, $\lambda|_0 \,S = 1-\lambda$; as a result the rational functions of $\lambda$ of Remark 2.3 do not transform simply under $\G_{\theta}$.
\end{rem}  
 
%SECTION 
\section{Higher-order forms} \label{hef}

Let $\G$ be a congruence subgroup  of SL$_2(\mathbb Z)$ and $k 
\in 2\mathbb Z$. As usual, for every character $\chi$ on $\G$
we define an action of 
$\G$ on the space of functions $f: \HH \to \C$ given, for $\g \in \G$, by 
$$(f|_{k, \chi} \gamma)(z)=f(\gamma z) \,  j(\gamma, z)^{-k}  \, 
\overline{\chi(\gamma)}.$$
Here, $j(\left ( \smallmatrix * & * \\ c & d \endsmallmatrix \right 
)):=cz+d$. When $\chi$ is the trivial character $\mathbf 1$ we write $f|_k  \, 
\gamma$. 

To describe the condition on the ``growth at the cusps" which will be 
included in the definition of higher-order forms we first note that a \it 
cusp \rm is a point $x \in \mathbb R \cup \{ \infty \}$ such that $\g x=x$ for 
some $\pm 1 \ne \g \in \G$ with $|$tr$(\g)|=2$. Cusps are important 
because they are the only points on the real line at which modular forms 
have a regular behavior. Indeed, one of their special features is that if 
a function is $\G$-invariant under the action $|_k$, then 
its composition by an appropriate map sending infinity to a cusp will be 
periodic and hence it will possess a Fourier expansion. This fact is 
crucial in establishing 
the finite
dimensionality of the space of modular forms. As a result (as will be 
seen in the proof of Theorem \ref{ThmPhi}) in order to uniquely determine 
functions, in addition to the transformation laws satisfied, detailed knowledge 
of the growth at the cusps is essential. 

Two cusps $\ca, \cb$ are 
called \it 
equivalent \rm if there is a $\g \in \G$ such that $\ca =\g \cb$ and 
\it inequivalent \rm otherwise. 

For each cusp $\ba$ there is a \it 
scaling matrix, \rm i.e. a matrix $\sa$ such that 
$$\sa(\infty)=\ba \qquad \text{and} \, \, \, \, 
\sigma^{-1}_{\ba}\Gamma_{\ba}\, \sigma_{\ba}=\Gamma_{\infty}$$ 
where $\Gamma_{\ba}$ (resp. $\Gamma_{\infty}$) is the set of elements of 
$\G$ fixing $\ba$ (resp. $\infty$) (\cite{DKMO}, Section 2).
If $f$ is a function on $\HH$ such that $f|_k \, \sa$ has a Fourier 
expansion of the form 
$$(f|_k \, \sa)(z)=\sum_{n=-m}^{\infty} a_n\, e^{\pi i n z/a}$$
for some $m \in \mathbb Z_{\ge 0}$ with $a \in \mathbb Q_+$ and $a_{-m} 
\ne 0$, then
we say that $f$ is \it meromorphic \rm at $\ca$. If $m=0$ (resp. $m<0$), 
then $f$ is \it holomorphic \rm (resp. {\it cuspidal}) at $\ca$.

For a set $\{\chi_1, \dots, \chi_n\}$ of characters on $\Gamma$, we 
define a \it weakly holomorphic nth-order modular form on $\Gamma$ of 
weight $k$ and 
type $(\chi_1, \dots, \chi_n)$ \rm to be a holomorphic function on $\HH$
which is meromorphic at the cusps and 
such that, for every $\gamma_i \in \Gamma$, 
$$f|_{k, \chi_1}(\gamma_1-1)|_{k, \chi_2}(\gamma_2-1) 
\dots|_{k, \chi_n}(\gamma_n-1)=0.
$$
If $f$ is holomorphic (resp. cuspidal) at all cusps, then we call the form 
holomorphic (resp. cuspidal). We denote the space of order $n$ weakly 
holomorphic (resp. holomorphic, cuspidal) forms of weight $k$ and type
$(\chi_1, \dots, \chi_n)$ on $\G$ by
$\tilde M_k^n(\G; \chi_1, \dots, \chi_n)$
(resp. $M_k^n(\G; \chi_1, \dots, \chi_n)$,
$S_k^n(\G; \chi_1, \dots, \chi_n)$). 

 Note that, in contrast to the case of ordinary modular forms, a Fourier 
expansion at the cusps is not guaranteed. The reason is that this would 
necessitate that $f$ be invariant under a 
subgroup of $\sa \Gamma_{\infty} \sa^{-1}$ which, in general, does not 
happen when the order is higher than $1$.

We next prove that the three new crossing probabilities $p_{\bar b}$, $p_{b}$, and $n$ are 
second-order forms and  determine their growth at the cusps. 

The group $\G(2)$ has three inequivalent cusps at 
$\infty$, $0$ and $-1$. Three corresponding scaling matrices are $I$, 
$U:= \left ( \smallmatrix 0 & -1 \\ 1 & 1 \endsmallmatrix \right ) = S T$
and
$U^2=\left ( \smallmatrix -1 & -1 \\ 1 & 0 \endsmallmatrix \right )$.

We can then prove 
%THEOREM
\begin{thm} As functions of $z$,  
$p_{\bar b}(z)$, $p_{ b}(z)$ and
$n(z)$ are weakly holomorphic second-order modular forms 
on 
$\Gamma(2)$ of weight $0$ and type $(\mathbf 1, \chi)$. The first power 
of $q$ appearing in the expansion of $p_{\bar b}$ at $\infty$ (resp. 
$0$, $-1$) is $1$ (resp. $q^{-5/6}$, $q^{2/3}$). The corresponding first 
powers for $p_{ b}$ and $n$ are $q^{-1/3}$, $1$, $q^{2/3}$
and $q^{1/2}$, $q^{-1}$, $q^{2/3}$.
\label{weakhol}
\end{thm}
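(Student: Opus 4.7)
The plan is to reduce everything to the building blocks $\lambda$, $g^{-2}$ and $\phi$ of Proposition~\ref{PropDZ}, then handle the transformation law and the cusp behaviour separately. Since $\lambda$ is $\Gamma(2)$-invariant, any rational function of $\lambda$ is already a first-order modular function, so only $g^{-2}$ and $\phi$ carry non-trivial information. A direct computation with $\eta(z+1)=e^{\pi i/12}\eta(z)$ on the generators $g_1=T^2$ and $g_2=ST^{-2}S^{-1}$ shows that the eta-quotient $g$ has the same multiplier $\chi$ as $\eta^4$ on $\Gamma(2)$, and since $\chi$ is cubic on $\Gamma(2)$ this means $g^{-2}$ (and hence every coefficient of $\phi$ appearing in Proposition~\ref{PropDZ}) transforms by $\chi^{-2}=\chi$. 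For $\phi=f_2/\eta^4$, the identity $f_3=\lambda'/(16\pi i\,g)$ gives $f_3$ the character $\overline{\chi}$, so the change of variable $w=\gamma u$ converts $f_3(w)\,dw$ into $\overline{\chi(\gamma)}f_3(u)\,du$; combined with $\eta^4(\gamma z)=\chi(\gamma)(cz+d)^2\eta^4(z)$ this yields
\[
\phi(\gamma z)=\overline{\chi(\gamma)}\phi(z)+c_\gamma,\qquad c_\gamma:=\tfrac{2\pi i}{3}\!\int_\infty^{\gamma\infty}\! f_3(w)\,dw,\quad \gamma\in\Gamma(2).
\]

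Second I would verify the higher-order law. Writing $p_{\bar b}=h\phi+\rho$ with $h=\frac{1+\lambda}{4\sqrt{3}\pi(1-\lambda)g^2}$ and $\rho=\frac{1}{2\sqrt{3}\pi(1-\lambda)}$ (and analogously for $p_b$, $n$), the invariant summand $\rho$ is killed by $|_{0,\mathbf{1}}(\gamma_1-1)$. For the remaining piece the two character factors conspire:
\[
(h\phi)|_{0,\mathbf{1}}(\gamma_1-1)=\chi(\gamma_1)\,c_{\gamma_1}\,h,
\]
which is a scalar multiple of $h$, and $h$ is $|_{0,\chi}$-invariant (its character $\chi$ cancels the $\overline{\chi}$ in the action), so $|_{0,\chi}(\gamma_2-1)$ annihilates it. The argument for $p_b$ is identical with $2\phi(i)-\phi$ replacing $\phi$. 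This places all three functions in $\tilde M_0^2(\Gamma(2);\mathbf{1},\chi)$.

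For the cusp expansions at $\infty$ I would plug the standard series $\lambda=16q^{1/2}+O(q)$, $g=q^{1/6}+O(q^{2/3})$ and $\phi=2^{-8/3}\lambda^{2/3}(1+O(\lambda))$ into Proposition~\ref{PropDZ}; these combine to $h\phi=\frac{1}{4\sqrt{3}\pi}+O(q^{1/2})$, yielding leading orders $q^{0}$ for $p_{\bar b}$, $q^{-1/3}$ for $p_b$ (dominated by the $2\phi(i)h$ pole), and $q^{1/2}$ for $n$ (where the two constant contributions cancel exactly). For the cusps $0$ and $-1$ I would apply the scaling matrices $U=ST$ and $U^2$, derive from $\lambda|S=1-\lambda$ and $\lambda|T=\lambda/(\lambda-1)$ the transformations
\[
(\lambda|_0 U)(z)=\frac{1}{1-\lambda(z)},\qquad (\lambda|_0 U^2)(z)=\frac{\lambda(z)-1}{\lambda(z)},
\]
obtain the corresponding expressions for $g|_0 U$, $g|_0 U^2$ from $g^3=\lambda(1-\lambda)/16$, and use the Gauss connection formulas for ${}_2F_1(1/3,2/3;5/3;\cdot)$ around $\lambda=1$ (resp.\ $\lambda=\infty$) to expand $\phi|_0 U$ (resp.\ $\phi|_0 U^2$); substituting into Proposition~\ref{PropDZ} and collecting leading orders then produces the remaining six values.

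The hard part will be this last cusp analysis. In particular, the common leading order $q^{2/3}$ at the cusp $-1$ shared by all three functions, and the sharp orders $q^{-5/6}$ for $p_{\bar b}$ at $0$ and $q^{-1}$ for $n$ at $0$, require exact matching of $\Gamma$-function constants coming out of the connection formulas (for instance $\Gamma(5/3)\Gamma(2/3)/\Gamma(4/3)=2$) so that the pole of $g^{-2}$ cancels against the hypergeometric asymptotics of $\phi$ at precisely the predicted order. By contrast, the modular part of the argument is a clean one-line bookkeeping consequence of $\chi^3=\mathbf{1}$ on $\Gamma(2)$.
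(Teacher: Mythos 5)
Your strategy is sound, and for the transformation law it is essentially the paper's argument in different packaging: the paper obtains $p_{\bar b}|_0(\gamma-1)=d_\gamma G$ with $G=\frac{1}{\lambda'}\bigl(\frac{\lambda\eta^4}{\lambda'}\bigr)'$ directly from Theorem \ref{1stthm} and the known law $f_2|_2(\gamma-1)=d_\gamma\eta^4$, whereas you derive the same statement from Proposition \ref{PropDZ} by tracking the character of $g$ and the Eichler-type shift of $\phi$. Your $h$ is proportional to $G$, and the character bookkeeping ($\chi^3=\mathbf 1$ on $\G(2)$, $g$ and $\eta^4$ sharing the multiplier $\chi$, $f_3=g\,\eta^4$ carrying $\bar\chi$) checks out. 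For the cusp expansions the paper takes a shorter route: it reads the leading behaviour straight off (\ref{form1})--(\ref{form3}) composed with $\lambda|_0U$ and $\lambda|_0U^2$, so each cusp involves a single hypergeometric asymptotic rather than the sum $h\phi+\rho$ whose two terms you must show cancel to exactly the right order. Those cancellations are real: for $p_b$ at $0$ both pieces are $O(q^{-1/2})$ while the sum is $O(1)$, and at $-1$ each function has both pieces $O(q^{1/2})$ while the sum is $O(q^{2/3})$.

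Three concrete points need attention. First, the illustrative identity $\Gamma(5/3)\Gamma(2/3)/\Gamma(4/3)=2$ is false: that quotient equals $2\Gamma(2/3)^2/\Gamma(1/3)=2^{8/3}C\approx 1.37$ (it is the constant making $\lim_{\lambda\to 1}\phi=C$); the identity that actually drives the cancellation at the cusp $-1$ is $\Gamma(5/3)\Gamma(1/3)/\bigl(\Gamma(4/3)\Gamma(2/3)\bigr)=2$. Second, since $p_{\bar b}$, $p_b$, $n$ are not $\G(2)$-invariant, neither the existence of a Fourier expansion of $f|_0\,\sigma_{\mathfrak a}$ (periodicity under some power of $T$) nor meromorphicity at the cusps equivalent to $\infty,0,-1$ is automatic, and both are part of what ``weakly holomorphic second-order form'' means here. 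The paper settles the first by noting that the expansions involve only integral powers of $\lambda^{1/3}$, hence are $T^6$-periodic, and the second by writing $p_{\bar b}|_0\,\sigma_{\mathfrak b}'=p_{\bar b}|_0\,\sigma_{\mathfrak a}T^m+c_\gamma\, G|_0\,\sigma_{\mathfrak a}T^m$; you can reproduce this verbatim from your own relation $p_{\bar b}|_0(\gamma-1)=\chi(\gamma)c_\gamma h$, but it is currently missing. Third, the cancellations described above are exactly the ``hard part'' you defer, so the six leading orders at $0$ and $-1$ are not yet established; working directly from (\ref{form1})--(\ref{form3}) as the paper does is the cheaper way to finish.
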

\begin{proof}
The function $f_2$ is a holomorphic weight $2$ second-order form 
of type $(\mathbf 1, \chi)$ satisfying 
\begin{equation}
f_2|_2(\gamma-1)=d_{\gamma}  \, \eta^4
 \qquad \text \, \, \text{for all} \, \,  \gamma \in \Gamma_{\theta}
\label{f.e.}
\end{equation}
for a constant $d_{\gamma}$ depending solely on $\gamma$ (\cite{KZ}, 
(38)). Since $\Gamma(2)$ is a subgroup of $\Gamma_{\theta}$, (\ref{f.e.}) 
holds for $\Gamma(2)$ too. It easy to verify that $d_{g_1}=0$ and 
$d_{g_{2}}=C\,(e^{-2 \pi i/3}-1)$, so the rhs of (\ref{f.e.}) is not zero for all $\gamma \in \G(2)$.

An easy computation (see also Theorem 2.2 of \cite{CD}) 
implies that, for $\gamma \in \Gamma(2)$,
$$
\left (\frac{\lambda}{\lambda'}f_2 \right )|_0  \, \gamma=
\left (\frac{\lambda}{\lambda'} \right )|_{-2} \,  \gamma \,  f_2 |_2  \, 
\gamma=\frac{\lambda}{\lambda'}f_2 +
\frac{d_{\gamma} \lambda \eta^4}{\lambda'}.
$$
Therefore 
\begin{equation}
\left (\frac{\lambda}{\lambda'}f_2 \right )'|_2 (\gamma-1)=
\left ( \frac{d_{\gamma} \lambda \eta^4}{\lambda'} \right )'.
\label{weak}
\end{equation}
which, by Theorem \ref{1stthm}, implies
\begin{equation}
p_{\bar b}|_0  \, (\gamma-1)=
d_{\gamma}  \, G
\label{weak1}
\end{equation}
where 
$$G:=\frac{1}{\lambda'}\left ( \frac{\lambda \eta^4}{\lambda'} 
\right )'.
$$
Since $\lambda \eta^4/\lambda'$ transforms as a modular form of weight 
$0$ with character $\chi$ on $\Gamma(2)$, its derivative transforms as 
a weight $2$ form and, therefore, $G$ transforms as a weight $0$ form with 
character 
$\chi$.

The behaviour of $p_{b}$ under the action of $\G(2)$ follows from 
(\ref{eq2}), which implies that 
$p_{b}=-p_{\bar b}+4\sqrt{3}\,C \, i  \, G$. 
Since $G|_0(\gamma-1)=(\chi(\g)-1)G$, we conclude that 
$p_{b}|_0(\g-1)$ transforms as a weight $0$ modular form with character 
$\chi$.

In the same way, since $(\lambda/(1-\lambda))'/\lambda'$ transforms as a 
weight $0$ modular form with trivial character, we deduce that 
$n|_0(\g-1)$ transforms as a weight $0$ modular form with character 
$\chi$.

To complete the proof  that $p_{\bar b}(z)$, 
$p_{ b}(z)$ and $n(z)$ are second-order 
weakly holomorphic forms of weight $0$ and type $(\mathbf 1, \chi)$ for 
$\Gamma(2)$, it remains to show that $p_{\bar b}(z)$ and 
$G$ are meromorphic at the cusps. The
argument also proves the last 
part of the theorem, specifying the first terms of the Fourier expansions 
of the three probability functions.

We start by determining the first powers of $q$ appearing in the Fourier 
expansions of $\lambda$ at the three inequivalent cusps $\infty$, $0$ and 
$-1$. We shall be making use of  (\ref{eta4lambda}), (\ref{phi}), 
and the transformations of $\lambda$: $\lambda|_0\,S = 1-\lambda$ and 
$\lambda|_0\,T = \lambda/(\lambda-1)$ (\cite{R}, (7.2.2)).  Thus 
\begin{equation}
\lambda|_0\,U = 1/(1-\lambda) \qquad  \text{and} \, \, 
\lambda|_0\,U^2 = 1-1/\lambda.
\label{lambdatransf}
\end{equation}
Since the leading term in $\lambda$
at $\infty$ is $16 q^{1/2}$, it follows that the leading terms at $0$ and 
$-1$ are $1+16q^{1/2}$ and $q^{-1/2}$ respectively. 
Therefore $\lambda'$ is a weight $2$ first-order weakly 
holomorphic form on $\G(2)$ and the first powers in its 
Fourier expansions at the cusps 
$\infty$, $0$ and $-1$ are  $q^{1/2}, q^{1/2}$ and 
$q^{-1/2}$ respectively. 
The corresponding powers for $\lambda/\lambda'$ 
are then $1$, $q^{-1/2}$ and $1$.  

Next we verify that $p_{\bar b}(z)$, $p_{ b}(z)$ and $n(z)$ 
have Fourier expansions at $\infty, 0, -1$. 
We first note that the Fourier expansion of $\lambda$ and 
(\ref{lambdatransf}) imply that $\lambda$ maps the cusps $\infty$, $0$ and 
$-1$ to $0$, $1$, and $\infty$ respectively. On the other hand,
each of the probabilities $\pi_h^{\bar b}(\lambda(z), 1)$, 
$\pi_h^b(\lambda(z), 1)$ and $\nu_h(\lambda(z), 1)$ is given (see 
(\ref{form1}), (\ref{form2}), and (\ref{form3})) in terms of a 
hypergeometric function of $\lambda$ or $1-\lambda$ and integral powers 
of $\lambda$ only. It follows from the classical linear transformation 
formulas for hypergeometric 
functions that  an expansion around any of the cusps will involve 
integral powers of  $\lambda$ or $\lambda^{1/3}$ only.  Thus any of these 
expansions is invariant under $T^6$, which establishes the 
periodicity of $g$, $g|_0U$, $g|_0U^2$, for $g=\pi_h^{\bar b}(\lambda(z), 
1), \pi_h^b(\lambda(z), 1)$ and $\nu_h(\lambda(z), 1)$.
 
The leading term in each of  $\pi_h^{\bar b}(\lambda,1)$, 
$\pi_h^b(\lambda,1)$ and $\nu_h(\lambda,1)$ and therefore $p_{\bar b}(z)$, 
$p_{ b}(z)$ and $n(z)$
 at each cusp then follows 
immediately from (\ref{form1}), (\ref{form2}) and (\ref{form3}) using  
the Taylor expansion of the hypergeometric function and  $\lambda 
\sim q^{1/2}$.

The first terms of the Fourier expansions of $G$ follow in similar 
(but simpler) way from the Fourier expansions of $\lambda$, $\lambda'$
and the observation that the first power in the expansion of $\eta^4$ is
$q^{1/6}$. 

Finally, one must also verify the meromorphicity at all equivalent cusps.
For $G$, this is automatically implied by its invariance under the action 
of $\G(2)$. For the three probability functions, it is no longer automatic 
because they are not $\G(2)$-invariant. Let $\bb=\g \,
\ba$ ($\g \in 
\G(2)$) for $\ba= \infty$, $0$ or $-1$. We have 
$\sb =\g \,\sa$, $\G_{\sb}=\g\, \G_{\sa} \g^{-1}$ and that if $\sb'$
is another scaling matrix for $\bb$ then $\sb=\sb' \,T^m$, with $m \in 2 
\mathbb Z_{\ge 0}$ (\cite{DKMO}, Section 2). Therefore, if 
$\sb'$ is 
any scaling matrix of $\bb$, we have (with (\ref{weak1})) 
$$p_{\bar b}|_0\, \sb'=p_{\bar b}|_0\,\sb \,T^m=
p_{\bar b}|_0\,\g \,\sa \,T^m=
p_{\bar b}|_0\, \sa\, T^m+c_{\g}\, G|_0 \,\sa \,T^m
$$
From the behaviour of $p_{\bar b}$ at $\infty, 0, -1$ we proved above 
as well as the behaviour of $G$ at these cusps, we deduce the 
meromorphicity of $p_{\bar b}$ at all cusps $\bb$. 
The same argument applies to $p_{b}$ and $n$.

For convenience, Table $1$ lists the first terms of the Fourier expansions 
for the functions  discussed.

\begin{table}
\label{Table}
\centering
\caption{First terms of Fourier expansions} \nonumber
\begin{tabular}{|l|l|l|l|} 
\hline
\hline
 & $\infty$ \rule {0pt}{2.6ex}& $0$ & $-1$ \\[2 pt] \hline
$\lambda$  \rule {0pt}{2.6ex}& $q^{1/2}$ & $1+16q^{1/2}$ & $q^{-1/2}$ \\[4 pt] \hline
$\frac{\lambda}{\lambda'} \rule {0pt}{2.6ex}$ & $1$ & $q^{-1/2}$ & $1$\\[4 pt] \hline
$\lambda'$ & $q^{1/2} \rule {0pt}{2.6ex}$ & $q^{1/2}$ & $q^{-1/2}$ \\[4 pt] \hline
$\frac{1}{\lambda'} \rule {0pt}{2.6ex}$ & $q^{-1/2}$ & $q^{-1/2}$ & $q^{1/2}$ \\[4 pt] \hline
$p_{\bar b}(z) \rule {0pt}{2.6ex}$ & $1$ & $q^{-5/6}$ & $q^{2/3}$ \\[4 pt] 
\hline
$p_{ b}(z) \rule {0pt}{2.6ex}$ & $q^{-1/3}$ & $1$ & 
$q^{2/3}$ \\[4 pt] 
\hline
$n(z) \rule {0pt}{2.6ex}$ & $q^{1/2}$ & $q^{-1}$ & 
$q^{2/3}$ \\[4 pt] 
\hline
$\left (\frac{\lambda}{\lambda'} \eta^4 \right )' \rule {0pt}{2.6ex}$ & $q^{1/6}$ & 
$q^{-1/3}$ & $q^{1/6}$ \\[4 pt] \hline
$\frac{1}{\lambda'} \left (\frac{\lambda}{\lambda'} \eta^4 \right )' \rule {0pt}{2.6ex}$ 
 & $q^{-1/3}$ & $q^{-5/6}$ & $q^{2/3}$ \\[4 pt] \hline  
\end{tabular} 
\end{table}

\end{proof}

%REMARK
\begin{rem} \rm
Theorem \ref{weakhol}, in combination with Prop. 2.2,
shows that $\phi$ is a weight $0$ weakly holomorphic second-order
modular form on $\Gamma(2)$ of type $(\mathbf 1, \bar \chi)$. This result is already implied in 
 \cite{KZ}, but here it is proved explicitly.
\end{rem} 
 
%SECTION
\section{Uniqueness theorems}

We call \it conformal block \rm 
of dimension $\alpha \in \mathbb R$ a function of $r > 0$ 
expressible in the form
\begin{equation} \label{blockdef}
\sum_{n=0}^{\infty}a_n \, e^{-\pi (n+\alpha) r},
\end{equation}
with $a_{0} \ne 0$. As noted in \cite{KZ}, if $\Pi(r)$ is a conformal 
block, then the convergence of (\ref{blockdef}) implies that the function 
 \begin{equation}
 P(ir) := \Pi(r),
 \end{equation}
 extends to a holomorphic function $P(z)$ with $z \in \HH$ (its power series expansion is exactly as in (\ref{blockdef}), except that $-r$ is replaced by $i z$). In \cite{KZ}, transformation properties under $S$, which maps the imaginary axis into itself, were important.  Here, the corresponding group element is $g_2: z \to z/(1+2z)$, which does not have this property.  Therefore we use a slightly different approach,  working exclusively with the  analytic continuations of the conformal blocks.
 
We start with a Lemma which will be needed in the proof of the next theorem.
%LEMMA
 \begin{lem} \label{lemS4} A function 
$f \in S_4(\G(2), \chi)$ is uniquely determined by 
the coefficients of $e^{\pi i z/3}$ in its
Fourier expansions at $\infty$ and $0$. 
\end{lem}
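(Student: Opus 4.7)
The plan is to reduce the statement to a classical weight-$2$ uniqueness result via the substitution $h:=f/\eta^4$, and then to apply the valence formula for $\Gamma(2)$.

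First, since $\eta$ is nowhere zero on $\HH$ and $\eta^4$ carries the same character $\chi$ as $f$ (at weight $2$), the quotient $h:=f/\eta^4$ is a holomorphic function on $\HH$ that transforms as a weight-$2$ modular form with trivial character on $\Gamma(2)$.

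Second, I would show $h$ is holomorphic at each of the three inequivalent cusps $\infty, 0, -1$ and read off its values there. From $\eta(z+1)=e^{\pi i/12}\eta(z)$ one has $\chi(T^2)=e^{2\pi i/3}$; since the stabilizer of every cusp of $\Gamma(2)$ is generated by a conjugate of $T^2$, any $f\in S_4(\Gamma(2),\chi)$ admits at the cusp $\ca\in\{\infty,0,-1\}$ a Fourier expansion
\[
(f|_{4,\chi}\sa)(z)=\sum_{n\geq 0}c_n^{(\ca)}\,e^{\pi i(n+1/3)z},
\]
starting at $n=0$ because $f$ is cuspidal. On the other hand, $\eta^4$ is fully modular on $\mathrm{SL}_2(\ZZ)$, so $\eta^4|_{2,\chi}\sa=\eta^4$ for each scaling matrix $\sa\in\{I,U,U^2\}$, and its leading Fourier term at every cusp is the factor $e^{\pi iz/3}$ coming from $\eta^4=q^{1/6}\prod_{n\geq 1}(1-q^n)^4$. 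Dividing yields $(h|_2\sa)(z)=c_0^{(\ca)}+O(e^{\pi iz})$, so $h\in M_2(\Gamma(2))$ and $h(\ca)$ equals the coefficient of $e^{\pi iz/3}$ in the expansion of $f$ at $\ca$.

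Third, suppose $f_1,f_2\in S_4(\Gamma(2),\chi)$ agree in the coefficients of $e^{\pi iz/3}$ at $\infty$ and at $0$, and set $h:=(f_1-f_2)/\eta^4\in M_2(\Gamma(2))$. Then $h(\infty)=h(0)=0$. The valence formula for $\Gamma(2)$ gives, for any nonzero $g\in M_k(\Gamma(2))$, total vanishing order $k\,[\mathrm{SL}_2(\ZZ):\Gamma(2)]/12=k/2$; for $k=2$ this equals $1$. However, vanishing at two distinct cusps contributes at least $2$ to this total, forcing $h\equiv 0$ and hence $f_1=f_2$.

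The main technical step is the second one: aligning the leading Fourier exponents of $f$ and of $\eta^4$ at the cusps $0$ and $-1$. This matching reduces to the full $\mathrm{SL}_2(\ZZ)$-modularity of $\eta^4$ combined with the uniform value $\chi(T^2)=e^{2\pi i/3}$ governing periodicity at each cusp of $\Gamma(2)$. Once the exponents are aligned, the valence bound closes the proof.
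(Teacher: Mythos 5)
Your proof is correct, but it takes a genuinely different route from the paper's. The paper proves the lemma by exhibiting an explicit basis of $S_4(\G(2),\chi)$: it invokes the Goldfeld--O'Sullivan weight-$2$ Eisenstein series $E_{\ca}-E_\infty$ (whose Fourier coefficients involve the Selberg--Kloosterman zeta function), the fact that $\{E_{-1}-E_\infty,\,E_0-E_\infty\}$ complements the cusp forms in weight $2$, and the Husseini--Knopp dimension count $\dim S_4(\G(2),\chi)=2$; multiplying by $\eta^4$ gives a basis of $S_4(\G(2),\chi)$, and evaluating the leading coefficients at $\infty$ and $0$ shows the coefficient map $f\mapsto(a,b)$ is invertible ($x_1=b$, $x_2=-a-b$). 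You instead divide by $\eta^4$ to land in $M_2(\G(2))$ with trivial character and kill the difference by the valence formula. Your key observations are sound: since $\chi$ is a character of all of $\mathrm{SL}_2(\ZZ)$ it is constant on conjugates of $T^2$, so the expansion of any $f\in S_4(\G(2),\chi)$ at each of the three cusps runs over $e^{\pi i(n+1/3)z}$, $n\ge 0$, and matches the leading exponent of $\eta^4|_2\sa=\chi(\sa)\eta^4$; hence $h=f/\eta^4$ is holomorphic at all cusps with $h(\ca)$ a nonzero multiple of the $e^{\pi i z/3}$-coefficient of $f$ at $\ca$ (the harmless factor $\chi(\sa)$ that you suppress does not affect vanishing). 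Since $\G(2)$ has no elliptic points and index $6$, the valence total for weight $2$ is $1$, while vanishing at the two cusps $\infty$ and $0$ (each of width $2$, local parameter $e^{\pi i z}$) already contributes at least $2$, forcing $h\equiv 0$. What each approach buys: yours is more elementary and self-contained, avoiding the Eisenstein-series and dimension-formula inputs entirely, and it delivers exactly the injectivity the lemma asserts (with $\dim S_4(\G(2),\chi)\le 2$ as a byproduct); the paper's argument additionally produces an explicit basis and the inverse of the coefficient map, though nothing beyond uniqueness is actually used later in the paper.
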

\it Proof of Lemma: \rm
For each cusp $\ca$, let
$$E_{\ca}(z):=2i \lim_{s \to 1} \frac{d}{dz}
\left ( \sum_{\g \in \G_{\ca} \backslash \G} \text{Im}
(\sigma_{\ca} \g z)^s \right ).$$
Then
$$E_{\ca}(z)-E_{\infty}(z)=-1-4 \pi^2 \sum_{n=1}^{\infty}n \left (Z_{\ca,
\infty}(n, 0, 1)-Z_{\infty, \infty}(n, 0, 1)\right ) e^{2 \pi i n z},$$
where $Z_{\ca, \cb}(m, n, s)$ is the Selberg-Kloosterman
zeta function. (Its definition is given in \cite{GO}, but is rather lengthy. 
Since we do not use it further, we refrain from quoting it here.) 
The set
$\{E_{-1}-E_{\infty}, E_{0}-E_{\infty}\}$ spans the direct complement
of the space of cusp forms within the space of all modular forms of
weight $2$ for
$\G(2)$ (\cite{GO}).  Further, the dimension of $S_4(\G(2), \chi)$ is
$2$ (\cite{HK}, Section 4.). Therefore a basis is given by
$(E_0-E_{\infty}) \, \eta^4$ and $(E_{-1}-E_{\infty}) \, \eta^4$. As a
result, any function
$f \in S_4(\G(2), \chi)$ has coefficients of $e^{\pi i z/3}$ in its
Fourier expansion at $\infty$ and $0$, which we denote by $a$ and $b$,
respectively.

Thus
$$f=x_1(E_{0}-E_{\infty}) \, \eta^4+x_2 (E_{-1}-E_{\infty}) \, \eta^4,$$
for some $x_1, x_2 \in \C$.
Further (see \cite{GO}), for any cusps $\ca, \cb, \cc$,
$$j(\sc,z)^{-2}(E_{\ca}(\sc z)-E_{\cb}(\sc z))
=\delta_{\ca \cc}-\delta_{\cb \cc}+O(e^{-2\pi y}),
$$
where $z = x+i y$, which gives
$$ f=-(x_1+x_2)e^{\pi i z/3}+O(e^{-7 \pi y/3})\qquad
f|_4 \, U=x_1 e^{\pi i z/3}+O(e^{-7 \pi y/3}). $$
Thus $x_1=b$ and $x_2=-a-b.$ \qed \\

We are now in a position to prove 
%THEOREM
\begin{thm} \label{ThmPhi} Let 
$F(z)=\sum_{n=0}^{\infty}b_n \, e^{\pi  i(n+\frac{1}{3}) z}$
be the analytic continuation of a conformal block of dimension 
$1/3$,  with $b_0 = \pi i/3$. 
Suppose that 
\begin{enumerate}
\item[(a)] 
\begin{equation} \label{F4a}
F|_4 \, g_2 =\chi(g_2) F,
\end{equation}
along some curve in $\HH$,
\item[(b)]
$F(-1+i/r)$ is 
bounded as $r \to \infty$, and 
\item[(c)]
\begin{equation} 
\lim_{r \to \infty} e^{\frac{\pi r}{3}}r^{-4}F \left (\frac{i}{r} 
\right )= -\frac{4}{3} \, 2^{1/3} \pi^2. 
\label{limitsa}
\end{equation}
\end{enumerate}
Then $F(z) \in S_4(\G(2), \chi)$ and
$$F(z)=\frac{\lambda'(z)}{\lambda(z)} \left ( 
\frac{\lambda(z) \, \eta(z)^4}{\lambda'(z)}\right )'.$$
\end{thm}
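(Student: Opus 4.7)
The plan is to prove the statement in two stages: first to verify $F \in S_4(\Gamma(2), \chi)$, and then to appeal to Lemma \ref{lemS4} — which pins down an element of $S_4(\Gamma(2),\chi)$ by the coefficients of $e^{\pi i z/3}$ in its Fourier expansions at the cusps $\infty$ and $0$ — in order to identify $F$ with the explicit candidate $F_0(z) := \tfrac{\lambda'(z)}{\lambda(z)}\bigl(\tfrac{\lambda(z)\eta(z)^4}{\lambda'(z)}\bigr)'$.

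\textbf{Showing $F \in S_4(\Gamma(2),\chi)$.} For the modular transformation, the Fourier expansion yields $F|_4 g_1 = e^{2\pi i/3} F$ directly, and this equals $\chi(g_1) F$ since $\chi(T^2) = e^{2\pi i/3}$ (read off from $\eta^4(z+2) = e^{2\pi i/3}\eta^4(z)$). Hypothesis (a) gives $F|_4 g_2 = \chi(g_2) F$ along a curve; both sides are holomorphic on $\HH$ (the conformal block series converges on $\HH$), so by the identity theorem this extends to all of $\HH$. Since $\{g_1, g_2\}$ generate $\Gamma(2)$, $F$ transforms with character $\chi$. For cuspidality at the three inequivalent cusps $\infty$, $0$, $-1$ (scaling matrices $I$, $U$, $U^2$), the arguments are respectively: the Fourier expansion already begins at $e^{\pi i z/3}$ at $\infty$; the identity $(F|_4 S)(ir) = F(i/r)/r^4$ combined with hypothesis (c) rules out Fourier modes of frequency $\beta \le 0$ in the period-$2$ (up-to-phase) expansion of $F|_4 S$, and hence of $F|_4 U = (F|_4 S)|_4 T$; and $(F|_4 U^2)(ir) = F(-1+i/r)/r^4 = O(r^{-4})$ by (b), killing the constant term at $-1$.

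\textbf{Applying the Lemma.} For $F$, the coefficient of $e^{\pi i z/3}$ at $\infty$ is $b_0 = \pi i/3$ by hypothesis, and the coefficient in $F|_4 U$ at $\infty$ is extracted from the limit in (c) via the shift $F|_4 U(z) = (F|_4 S)(z+1)$. For $F_0$: the weights and characters of $\lambda, \lambda', \eta^4$ together with the leading Fourier exponents tabulated in the proof of Theorem \ref{weakhol} show $F_0 \in S_4(\Gamma(2), \chi)$. Its coefficient of $e^{\pi i z/3}$ at $\infty$ is then computed directly from $\lambda \sim 16 q^{1/2}$, and the coefficient at $0$ from $\lambda|_0 S = 1-\lambda$, $\eta^4|_2 S = -\eta^4$, together with the identity $(\lambda\eta^4/\lambda')' = \eta^4(1+\lambda)/(3(1-\lambda))$ established in the proof of Proposition \ref{PropDZ}. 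Matching these coefficients and invoking Lemma \ref{lemS4} gives $F - F_0 \equiv 0$.

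\textbf{Main obstacle.} The hardest part will be extracting the Fourier data of $F$ at the cusp $0$ from hypothesis (c), which is only a one-sided asymptotic along the imaginary axis. Since $S \notin \Gamma(2)$, $F|_4 S$ is not $T^2$-periodic; rather, the identity $ST^2 S^{-1} = g_2^{-1}$ translates into $(F|_4 S)|_4 T^2 = \overline{\chi(g_2)}(F|_4 S)$, which constrains the admissible Fourier frequencies to $\tfrac{1}{3} + \mathbb Z$ and identifies $e^{\pi i z/3}$ as the leading mode responsible for the decay in (c). Once this Fourier structure is in place, both the cuspidal check at $0$ and the comparison of coefficients with $F_0$ reduce to direct computations.
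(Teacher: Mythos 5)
Your proposal is correct and follows essentially the same route as the paper: establish $F\in S_4(\G(2),\chi)$ by extending (a) to all of $\HH$ via analyticity, deriving $T^2$-invariance from the Fourier expansion, and using (b) and (c) to force vanishing at the cusps $-1$ and $0$, then pin down $F$ by matching the coefficients of $e^{\pi i z/3}$ at $\infty$ and $0$ against the explicit candidate via Lemma \ref{lemS4}. Your observation that $ST^2S^{-1}=g_2^{-1}$ forces the expansion of $F|_4\,S$ into frequencies $\tfrac13+\ZZ$ is a worthwhile detail that the paper leaves implicit when it passes from the limit in (c) to the Fourier data at the cusp $0$.
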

\begin{proof} 
We first prove that $F$, if it exists, is a weight $4$ cusp form for $\G 
(2)$ with character $\chi$.

Since $F$ is analytic (\ref{F4a}) holds on all of 
$\HH$.
The Fourier expansion of $F$ and $\chi (T^2)
=e^{2 \pi i/3}$ then imply 
$$F|_{4, \chi} \, T^2 =F.$$ 
Therefore, $F|_{4, \chi} \, \g=F,$ for all $\g \in \G (2)$.

We next establish the vanishing at the cusps. First, by 
its Fourier expansion, $F$ vanishes at $\infty$. To verify that the 
Fourier expansion of $F|_4 \, U$ has only positive powers
(and thus that $F$ vanishes at $0$), it suffices to show 
$(F|_4 \, U)(-1+ir) \to 0$ as $r \to \infty$. But
$(F|_4 \, U)(-1+ir)= F(i/r) r^{-4}$, which  must vanish  as $r \to \infty$  by (\ref{limitsa}). For the cusp at $-1$ consider $(F|_4 \, U^2)(ir)= F(-1+i/r) \, r^{-4}$.  By the assumed boundedness of $F(-1+i/r)$, this vanishes as $r \to \infty$.

By  assumption, the 
coefficient of $e^{\pi i z/3}$ in the expansion of $F$ at $\infty$ is 
$$b_0 = \lim_{z \to \infty} e^{-\pi i z/3} F(z)=\frac{\pi \, i}{3}.
$$
The corresponding coefficient at $0$ is $\lim_{z \to \infty} e^{-\pi i 
z/3} (F|_4 \, U)(z)$.  Taking the limit over the line $z=-1+ir,$ as $r \to 
\infty$, and using  (\ref{limitsa}), one has 
$$\lim_{r \to \infty} F(i/r) r^{-4} e^{\pi r/3}e^{\pi i 
/3}=-\frac{4}{3} \, 2^{1/3} e^{\pi i/3} \pi^2.$$

A computation confirms that these coincide with the corresponding 
coefficients
for $\frac{\lambda'}{\lambda} (\frac{\lambda}{\lambda'} \eta^4)'.$ Lemma 
\ref{lemS4} then implies the result.  
\end{proof}

%REMARK
\begin{rem} \rm Theorem \ref{ThmPhi} resembles Theorem $1$ of \cite{KZ}.  
In that case, the assumption of modular transformation properties under $S$ led to a proof that an even conformal block (i.e.\ one with all $a_{2n+1} = 0$) has dimension $1/3$ and is in fact equal to Cardy's function $\Pi_h$.  Here we assume the block dimension and modular transformation property under the generator $g_2$ of $\G(2)$ and find that the function is a simple expression involving $\eta^4$, which is proportional to the $z$-derivative of $\Pi_h$.  The assumption of evenness is not necessary.
\end{rem}
%REMARK
\begin{rem} \rm
Note that if we let  $z=ir/(1+2ir)$ with $r>0$ be the curve in 
(\ref{F4a}), the assumption of transformation under $g_2$, then the lhs 
of the equation is in the physical regime, 
i.e.\ $z = ir$.
\end{rem}
.
%THEOREM
\begin{thm} \label{ThmPi} Let
$P(z)=
\sum_{n=0}^{\infty}a_n \, e^{\pi i (n+1) z}$
be the analytic continuation of a conformal block of dimension 
$1$,  and let $F(z)$ be as in Theorem \ref{ThmPhi}. 
Set $$P_1:=4 \sqrt{3} \,C \, i \, F - P.
$$
Further, for a fixed $A \in \mathbb C$   set
 $$\tilde P(z):=P(z)+A \, F(z).$$
Suppose that
\begin{enumerate}
\item[(a)] \begin{equation}
\tilde P|_4 \, g_2
=\tilde P,
\label{fe2}
\end{equation}
along a curve in $\HH$ and
\item[(b)] $P(-1+i/r)$ is bounded as $r \to \infty$.
\end{enumerate}
Then 
\begin{enumerate}
\item[(i)] If $A=-C$ and 
$P(i/r)$ is bounded as $r \to \infty$,
$$\qquad P(z)=\frac{(\lambda'(z))^2}{\lambda(z)} p_{\bar 
b}(z),$$
\item[(ii)] Under the same assumptions,
$$P_1(z)=\frac{(\lambda'(z))^2}{\lambda(z)} p_{b}(z).$$
\item[(iii)] If $A=C$ and 
$P(i/r)r^{-4}  \to - \sqrt 3 \pi /4$ as $r \to \infty$,  
$$P(z)=\frac{(\lambda'(z))^2}{\lambda(z)} n(z).$$
\end{enumerate}
\end{thm}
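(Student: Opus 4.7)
The plan is to prove (i) by showing that $Q := P - \frac{(\lambda')^2}{\lambda} p_{\bar b}$ is a weight-$4$ cusp form on $\Gamma(2)$ with trivial character, and then invoking $\dim S_4(\Gamma(2),\mathbf 1)=0$ (from the standard dimension formula for $\Gamma(2)$: genus $0$, three cusps, no elliptic points). Part (ii) then follows algebraically from Theorem \ref{1stthm}: multiplying $\lambda'(p_{\bar b}+p_b)=4\sqrt 3 C i(\lambda\eta^4/\lambda')'$ by $\lambda'/\lambda$ gives $\frac{(\lambda')^2}{\lambda}(p_{\bar b}+p_b)=4\sqrt 3 C i F$, whence $P_1=4\sqrt 3 C i F-P=\frac{(\lambda')^2}{\lambda}p_b$. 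Part (iii) will be handled by the same strategy as (i), with $p_{\bar b}$ replaced by $n$.

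To establish the modularity of $Q$: the $T^2$-invariance is immediate since $P$'s Fourier expansion involves only integer powers of $q^{1/2}$, $\frac{(\lambda')^2}{\lambda}$ is $\Gamma(2)$-invariant of weight $4$, and $p_{\bar b}$ and $n$ are $T^2$-invariant because $d_{g_1}=0$ (proof of Theorem \ref{weakhol}). For the $g_2$-invariance I extend the hypothesis $\tilde P|_4 g_2=\tilde P$ from the given curve to all of $\HH$ by analytic continuation, obtaining $P|_4(g_2-1)=A(1-\chi(g_2))F$ with $\chi(g_2)=e^{-2\pi i/3}$ (since $g_2$ is conjugate to $T^{-2}$). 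On the other hand, from $F=\frac{(\lambda')^2}{\lambda}G$ (checked directly from the definitions of $F$ and $G$) and the chain $f_2|_2(g_2-1)=C(e^{-2\pi i/3}-1)\eta^4 \Rightarrow (\lambda f_2/\lambda')'|_2(g_2-1)$ in the proof of Theorem \ref{weakhol}, combined with Theorem \ref{1stthm}, one computes $(\frac{(\lambda')^2}{\lambda}p_{\bar b})|_4(g_2-1)$ as a definite multiple of $F$; the stated value $A=-C$ makes the two contributions equal, so $Q|_4(g_2-1)=0$. For (iii), the identity $n=-p_{\bar b}+(\text{rational in }\lambda)$, with the rational piece $\Gamma(2)$-invariant, gives $n|_0(g_2-1)=-p_{\bar b}|_0(g_2-1)$, and the sign flip is compensated by taking $A=C$.

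Cuspidality of $Q$ at the three inequivalent cusps $\infty, 0, -1$ comes from the boundedness hypotheses combined with the Fourier data in Table $1$. At $\infty$, both $P$ (leading $e^{\pi i z}$) and $\frac{(\lambda')^2}{\lambda}p$ (leading $q^{1/2}$ or $q$) vanish. At cusp $0$, I use $(P|_4 U)(it)=P(-1/(it+1))(it+1)^{-4}\sim P(i/t)\,t^{-4}$ as $t\to\infty$: in case (i) the assumed boundedness of $P(i/r)$ forces this to tend to $0$, matching the $q^{1/6}$-vanishing of $\frac{(\lambda')^2}{\lambda}p_{\bar b}$; in case (iii) the sharper limit $P(i/r)\,r^{-4}\to -\sqrt 3\,\pi/4$ is needed because $\frac{(\lambda')^2}{\lambda}n$ has a nonzero constant term at $0$. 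At cusp $-1$, the boundedness of $P(-1+i/r)$ makes $(P|_4 U^2)(it)=P(-1+i/t)(it)^{-4}\to 0$, matching the $q^{1/6}$-vanishing of $\frac{(\lambda')^2}{\lambda}p$ there. Hence $Q\in S_4(\Gamma(2),\mathbf 1)=\{0\}$.

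The main obstacle I anticipate is the careful bookkeeping of multiplicative constants, in two places: first, verifying that the constant in $(\frac{(\lambda')^2}{\lambda}p_{\bar b})|_4(g_2-1)=cF$ is precisely $c=A(1-\chi(g_2))=-C(1-e^{-2\pi i/3})$ when $A=-C$, which requires correctly tracking the factor $4\sqrt 3 i$ coming from Theorem \ref{1stthm} through the derivative chain; and second, verifying in case (iii) that the leading Fourier coefficient of $\frac{(\lambda')^2}{\lambda} n$ at cusp $0$ equals exactly $-\sqrt 3\,\pi/4$, which requires expanding $\nu_h(\lambda,1)$ near $\lambda=1$ using the hypergeometric transformation formulas and combining this with the leading behaviour of $\lambda$ and $\lambda'$ under the scaling matrix $U$.
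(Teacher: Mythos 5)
Your proposal is correct and follows essentially the same route as the paper: form $Q=P-\frac{(\lambda')^2}{\lambda}p_{\bar b}$ (the paper's $P-L$), establish $T^2$- and $g_2$-invariance using $P|_4(g_2-1)=A(1-e^{-2\pi i/3})F$ against $p_{\bar b}|_0(g_2-1)=d_{g_2}G$ with $d_{g_2}=C(e^{-2\pi i/3}-1)$ and $F=\frac{(\lambda')^2}{\lambda}G$, check cuspidality from the boundedness hypotheses and Table 1, and conclude from $S_4(\Gamma(2),\mathbf 1)=\{0\}$, with (ii) following algebraically from (\ref{eq1})--(\ref{eq2}) and (iii) handled analogously via the nonzero constant term at the cusp $0$. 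The constant-tracking you flag as the main obstacle is already resolved by (\ref{weak1}) from the proof of Theorem \ref{weakhol}, which packages the $4\sqrt{3}\,i$ factors so that the multiple of $F$ is exactly $d_{g_2}$.
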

\begin{proof} We consider the claims in order. \\

\it Proof of (i) \rm 
We proceed exactly as in the proof of Theorem \ref{ThmPhi}, except replacing $\chi$ by $1$.  Since $\tilde P$ (and $P$) are analytic,  if they exist,
(\ref{fe2}) holds in all of $\HH$.
Since $g_2 = ST^{-2}S^{-1}$, (\ref{F4a}) and $\chi (T^2)
=e^{2 \pi i/3}$ give
\begin{equation}
P|_4 \, (g_2 -1)=A(1-e^{-2 \pi i /3}) F \in S_4(\G(2), \chi).
\label{t.law0}
\end{equation}
Further, the Fourier expansion of $P$ implies that $P|_4 \left ( 
T^2-1 \right ) =0.$
Hence
\begin{equation}
P|_4 (\g-1) \in S_4(\G(2), \chi).
\label{t.law}
\end{equation}
As in the proof of Theorem \ref{ThmPhi}, we have by assumption that $P(ir)$, $(P|_4 \, U)(-1+ir)$,  and 
$(P|_4 \, U^2)(ir) \to 0$ as $r \to \infty$, establishing that $P(z)$ vanishes at all three cusps. By (\ref{t.law}),  $P|_4 \, \g$ ($\g \in \G(2)$) also vanishes at all three cusps. 

Now let 
$$L:=\frac{ (\lambda')^2} {\lambda} p_{\bar b}.$$
 Then
\begin{eqnarray} \nonumber
(P-L)|g_1 &=& P|g_1-L|g_1 \\
&=& P-L,
\end{eqnarray}
by the periodicity of both $P$ and $L$.  Further
$$(P-L)|_4(g_2-1)=A(1 - e^{-2\pi i /3})F- \frac{ (\lambda')^2} {\lambda} p_{\bar b}|_4(g_2-1),$$
by (\ref{t.law0}). By (\ref{weak1}) and Theorem \ref{ThmPhi},
$$p_{\bar b}|_4(g_2-1)=d_{g_2}G=d_{g_2} \frac{\lambda}{(\lambda')^2} F.$$
 Therefore, since $(\lambda')^2/ \lambda$ is invariant under $g_2$,
\begin{eqnarray} \nonumber
(P-L)|_4(g_2-1) &=& A(1 - e^{-2\pi i /3})F-d_{g_2}F \\
&=& 0,
\end{eqnarray}
using the formula for $d_{g_2}$ after (\ref{f.e.}).  Thus $P-L$ is invariant under both generators of $\G(2)$.  
As noted when recalling the definition of cusps, this $\G(2)$-invariance 
in terms of the action $|_4$ also implies that $L$ has Fourier 
expansions at each cusp of  $\G(2)$. Theorem \ref{weakhol} (see Table  
$1$)  implies that it vanishes at each cusp.  Since $P$ vanishes as well, 
so does $P-L$, and the latter is a standard weight $4$ cusp form with 
trivial character for $\G(2)$.  Therefore it vanishes (\cite{HK}), 
establishing the uniqueness of $P$ (and $\tilde P$). This 
completes the proof of $(i)$. 

\it Proof of (ii): \rm This is immediate from $(i)$, Theorem 
\ref{ThmPhi}, and (\ref{eq2}).

\it Proof of (iii): \rm Analogous to the proof of $(i)$, except that $L$ 
is redefined with $n$ replacing $p_{\bar b}$.  Here, 
neither $P$ nor $L$ vanishes at the cusp at $0$, but $P-L$ does, which is 
sufficient.

\end{proof}
%REMARK
\begin{rem} \rm Theorem \ref{ThmPi} resembles Theorem $3$ of \cite{KZ}.  In that case, the assumption of modular transformation properties under $S$ for two conformal blocks of arbitrary dimension led to  a characterization of  crossing probabilities that generalize $\Pi_h$ and $\Pi_{h \bar v}$.  Here, by assuming the block dimensions, modular transformation properties under the generator $g_2$ of $\G(2)$, and cusp properties, we reproduce all three new crossing probabilities.  
\end{rem}
%REMARK
\begin{rem} \rm It is interesting that the only input to Theorem \ref{ThmPi} from the physics of the problem is via $F$. The physical input to $F$ comes from $\eta^4$, itself proportional to the $z$-derivative of $\Pi_h$. Thus, under the assumptions
 of the theorem, all three crossing probabilities are determined by $\Pi_h$.
\end{rem}

\section{Acknowledgements}
ND thanks Marvin Knopp for useful conversations on certain 
aspects of the paper. PK is grateful to Don Zagier for leading us 
to Proposition \ref{PropDZ}, 
for other useful conversations, and for his kind hospitality at the 
Coll{\`e}ge de France, where part of this research was done.  He also thanks 
C. Hongler for an advance copy of \cite{HS} and J. J. H. Simmons and R. M. Ziff for useful
conversations.

This work was supported in part by the National Science Foundation Grant 
No. DMR-0536927 (PK).

\end{document}